\newcommand{\jump}[1]{ \llbracket #1 \rrbracket }
\newcommand{\id}{{\boldsymbol{\mathbbm{1}}}}
\tikzset{square arrow/.style={to path={-- ++(0,-.25) -| (\tikztotarget)}}}
\newcommand{\Chi}{\raisebox{0.5ex}{\mbox{{\Large $\chi$}}}}
 \newtheorem{theorem}{Theorem}[section]
 \newtheorem{remark}[theorem]{Remark}
 \newtheorem{postulate}[theorem]{Postulate}
 \newcommand{\R}{\mathbb{R}}
\DeclareMathOperator{\sym}{sym}
\DeclareMathOperator{\skw}{skew}
\DeclareMathOperator{\tr}{tr}
\DeclareMathOperator{\axl}{axl}
\DeclareMathOperator{\anti}{anti}
\DeclareMathOperator{\dev}{dev}
\DeclareMathOperator{\sL}{\mathfrak{sl}}
\DeclareMathOperator{\so}{\mathfrak{so}}
\DeclareMathOperator{\gl}{\mathfrak{gl}}
\DeclareMathOperator{\Curl}{Curl\,}
\newcommand{\Sym}{ {\rm{Sym}} }
\def\Div{\textrm{Div\,}}
\def\curl{\textrm{curl\,}}
\def\skew{\text{skew}}
\def\dd{\displaystyle}
\let\@fnsymbol\@arabic
\begin{document}
\title{\large On some fundamental misunderstandings in the indeterminate couple stress model. A comment on the recent papers [A.R. Hadjesfandiari and G.F. Dargush, Couple stress theory for solids, Int. J. Solids  Struct. 48, 2496--2510, 2011; A.R. Hadjesfandiari and G.F. Dargush, Fundamental solutions for isotropic size-dependent couple stress elasticity, Int. J. Solids  Struct. 50, 1253--1265, 2013]}
\author{\normalsize{
Patrizio Neff\thanks{Patrizio Neff,  \ \ Head of Lehrstuhl f\"{u}r Nichtlineare Analysis und Modellierung, Fakult\"{a}t f\"{u}r
Mathematik, Universit\"{a}t Duisburg-Essen,  Thea-Leymann Str. 9, 45127 Essen, Germany, email: patrizio.neff@uni-due.de}\quad
and \quad Ingo M\"unch\thanks{Ingo M\"unch, \ \ Institute for Structural Analysis, Karlsruhe Institute of Technology, Kaiserstr. 12, 76131 Karlsruhe,
Germany, email: ingo.muench@kit.edu} \quad and \quad Ionel-Dumitrel Ghiba\thanks{ Ionel-Dumitrel Ghiba, \ \ \ \ Lehrstuhl f\"{u}r Nichtlineare Analysis und Modellierung, Fakult\"{a}t f\"{u}r Mathematik,
Universit\"{a}t Duisburg-Essen, Thea-Leymann Str. 9, 45127 Essen, Germany;  Alexandru Ioan Cuza University of Ia\c si, Department of Mathematics,  Blvd.
Carol I, no. 11, 700506 Ia\c si,
Romania; Octav Mayer Institute of Mathematics of the
Romanian Academy, Ia\c si Branch,  700505 Ia\c si; and Institute of Solid Mechanics, Romanian Academy, 010141 Bucharest, Romania, {email: dumitrel.ghiba@uni-due.de, dumitrel.ghiba@uaic.ro, tel: 0201-183-2827}} \quad
and \quad
Angela Madeo\footnote{Angela Madeo, \ \  Laboratoire de G\'{e}nie Civil et Ing\'{e}nierie Environnementale,
Universit\'{e} de Lyon-INSA, B\^{a}timent Coulomb, 69621 Villeurbanne
Cedex, France; and International Center M\&MOCS ``Mathematics and Mechanics
of Complex Systems", Palazzo Caetani,
Cisterna di Latina, Italy,
 email: angela.madeo@insa-lyon.fr}}
}
\maketitle

\begin{abstract}

In a series of papers which are either published \cite{hadjesfandiari2013fundamental,hadjesfandiari2011couple} or available as preprints \cite{Dargush,hadjesfandiari2013skew,hadjesfandiari2010polar,hadjesfandiari2014evo} Hadjesfandiari and Dargush
 have reconsidered the linear  indeterminate couple stress model. They are postulating a certain physically plausible split in the virtual work principle. Based on this postulate they claim that the second-order couple stress tensor must always be skew-symmetric. Since they use an incomplete set of boundary conditions in their virtual work principle their statement contains unrecoverable errors.  This is shown by specifying their development to the isotropic case. However, their choice of constitutive parameters is mathematically possible and still yields a well-posed boundary value problem.
\\
\vspace*{0.25cm}
\\
{\bf{Key words:}}  modified couple stress model, symmetric Cauchy stresses, Boltzman axiom, symmetry of couple stress tensor, generalized continua,  microstructure, size effects,  strain gradient elasticity,   conformal invariance, gradient elasticity, consistent traction boundary conditions.
\\
\vspace*{0.25cm}
\\
{\bf{AMS 2010 subject classification:} } 74A30, 74A35.
\end{abstract}

\newpage

\tableofcontents

\section{Introduction}

Among higher gradient elasticity models \cite{Mindlin65,Mindlin68,aifantis2011gradient,maugin1980method,lazar2006note,MauginVirtualPowers} one of the very first models considered in the literature is the so called indeterminate couple stress model \cite{Grioli60,Mindlin62,Toupin64,Koiter64}
in which the higher gradient contributions only enter through gradients
of the continuum rotation, i.e. the total elastic energy can be written
as $$W(\nabla u,\nabla(\nabla u))=W_{\rm lin}({\rm sym}\nabla u)+W_{{\rm curv}}(\nabla{\rm curl}\,u).$$
 In general, higher gradient elasticity models are used to
describe mechanical structures at the micro- and nano-scale or to
regularize certain ill-posed problems by means of these higher gradient
contributions.

In a series of papers which are either published \cite{hadjesfandiari2013fundamental,hadjesfandiari2011couple} or available as preprints \cite{Dargush,hadjesfandiari2013skew,hadjesfandiari2010polar,hadjesfandiari2014evo} Hadjesfandiari and Dargush
 have reconsidered the linear  indeterminate couple stress model. They are postulating a certain physically plausible split in the virtual work principle. Based on this postulate they claim that the second-order couple stress tensor must be skew-symmetric. Since their development has spread considerable confusion  in the field of higher gradient elasticity, we were prompted to carefully re-examine their claim. In doing so we hope to contribute an important clarification in the field and to put an end to the above mentioned confusion.

 In the course of our re-examination  it turned out that the boundary conditions in the classical indeterminate couple stress theory have never been correctly derived.  In \cite{MadeoGhibaNeffMunchCRM,MadeoGhibaNeffMunchKM,NeffGhibaMadeoMunch} we provide, for the first time the consistent and complete boundary conditions for the classical indeterminate couple stress model. In doing so, we find the underlying error in the argument by  Hadjesfandiari and Dargush \cite{hadjesfandiari2011couple,hadjesfandiari2010polar,hadjesfandiari2011couple,hadjesfandiari2013fundamental,hadjesfandiari2013skew,hadjesfandiari2014evo}. While  Hadjesfandiari and Dargush  start with the linear general anisotropic couple stress response and only later specify to isotropy, for definiteness, we consider from the outset the linear isotropic indeterminate couple stress case. Nevertheless, our development is essentially independent of any isotropy assumption.   It is clear that exhibiting the errors in their development for the simpler case of isotropy is sufficient for invalidating their claim.

Before  discussing the papers by Hadjesfandiari and Dargush \cite{hadjesfandiari2011couple,hadjesfandiari2013fundamental,hadjesfandiari2014evo,hadjesfandiari2013skew} we will first recall the indeterminate couple stress model in its accepted format as far as kinematics and equilibrium equations are concerned. We also need to introduce the new set of traction boundary conditions which rectify the shortcomings in all previous papers. For comparison, the up to now accepted traction boundary conditions are also presented.

In the light of this new framework, we try to follow the argument given by  Hadjesfandiari and Dargush \cite{hadjesfandiari2011couple,hadjesfandiari2013fundamental,hadjesfandiari2014evo,hadjesfandiari2013skew} as closely as possible. We will show that their implicitly formulated requirement, recast as a physically plausible postulate by us, leads to the skew-symmetry of the couple stress tensor if and only if  the classical traction boundary conditions are assumed. However, within the corrected format of traction  boundary  conditions no similar conclusion is possible.

Despite the finally erroneous claim by Hadjesfandiari and Dargush we recognize their work in being the reason to reconsider the indeterminate couple stress model and to finally find the underlying error which occurred in the accepted boundary conditions and not directly in    Hadjesfandiari and Dargush's work.

\section{Notational agreements}
In this paper, we denote by $\R^{3\times 3}$ the set of real $3\times 3$ second order tensors, written with
capital letters. For $a,b\in\R^3$ we let $\langle {a},{b}\rangle_{\R^3}$  denote the scalar product on $\R^3$ with
associated vector norm $\|{a}\|^2_{\R^3}=\langle {a},{a}\rangle_{\R^3}$.
The standard Euclidean scalar product on $\R^{3\times 3}$ is given by
$\langle{X},{Y}\rangle_{\R^{3\times3}}=\tr({X Y^T})$, and thus the Frobenius tensor norm is
$\|{X}\|^2=\langle{X},{X}\rangle_{\R^{3\times3}}$. In the following we omit the index
$\R^3,\R^{3\times3}$. The identity tensor on $\R^{3\times3}$ will be denoted by $\id$, so that
$\tr({X})=\langle{X},{\id}\rangle$. We adopt the usual abbreviations of Lie-algebra theory, i.e.,
 $\so(3):=\{X\in\mathbb{R}^{3\times3}\;|X^T=-X\}$ is the Lie-algebra of  skew symmetric tensors
and $\sL(3):=\{X\in\mathbb{R}^{3\times3}\;| \tr({X})=0\}$ is the Lie-algebra of traceless tensors.
 For all $X\in\mathbb{R}^{3\times3}$ we set $\sym X=\frac{1}{2}(X^T+X)\in\Sym$, $\skw X=\frac{1}{2}(X-X^T)\in \so(3)$ and the deviatoric part $\dev X=X-\frac{1}{3}\;\tr(X)\id\in \sL(3)$  and we have
the \emph{orthogonal Cartan-decomposition  of the Lie-algebra} $\gl(3)$
\begin{align}
\gl(3)&=\{\sL(3)\cap \Sym(3)\}\oplus\so(3) \oplus\mathbb{R}\!\cdot\! \id,\quad
X=\dev \sym X+ \skw X+\frac{1}{3}\tr(X) \id\,.
\end{align}
Throughout this paper (when we do not specify else) Latin subscripts take the values $1,2,3$.  Typical conventions for differential
operations are implied such as comma followed
by a subscript to denote the partial derivative with respect to
 the corresponding cartesian coordinate. We also use the Einstein notation of the sum over repeated indices if not differently specified. Here,
we consider the operators $\axl:\so(3)\rightarrow\mathbb{R}^3$ and $\anti:\mathbb{R}^3\rightarrow \so(3)$ through
\begin{align}
(\axl \overline{A})_k=-\frac{1}{2}\, \epsilon_{ijk}\overline{A}_{ij},\qquad \overline{A}.\, v=(\axl \overline{A})\times v, \quad \quad (\anti(v))_{ij}=-\varepsilon_{ijk}v_k, \quad\quad \overline{A}_{ij}=\anti(\axl \overline{A})_{ij},
\quad \quad
\end{align}
 for all  $ v\in\mathbb{R}^3$ and $\overline{A}\in \so(3)$, where $\epsilon_{ijk}$ is the totally antisymmetric third order permutation tensor. We recall that  for a  third  order tensor $\mathbb{E}$ and $X\in \mathbb{R}^{3\times 3}$, $v\in \mathbb{R}^3$ we have the contraction operations $\mathbb{E}: X\in \mathbb{R}^{3}$, $\mathbb{E}. \, v\in \mathbb{R}^{3\times 3}$ and $X.\, v\in \mathbb{R}^3$, with the components
\begin{align}
(\mathbb{E}:\, X)_{i}=\mathbb{E}_{ijk}\,X_{kj}\, , \qquad  (\mathbb{E}. \,v)_{ij}=\mathbb{E}_{ijk}\,v_{k}\,,\qquad (X.\, v)_{i}=X_{ij}\,v_j.
\end{align}
For multiplication of two matrices  we will not use  other specific notations, this means that for $A,B\in \mathbb{R}^{3\times 3}$ we are setting $(A B)_{ij}=A_{ik}B_{kj}$.

 We consider a body which  occupies  a bounded open set $\Omega$ of the three-dimensional Euclidian space $\R^3$ and assume that its boundary
$\partial \Omega$ is a piecewise smooth surface. An elastic material fills the domain $\Omega\subseteq \R^3$ and we refer the motion of the body to  rectangular axes $Ox_i$. Let us consider an open subset $\Gamma$ of $\partial \Omega$.
Here,  $\nu$ is a  vector tangential to the surface $\Gamma$ and which is orthogonal to its boundary $\partial \Gamma$, $\tau=n\times \nu$ is the tangent to the curve $\partial \Gamma$ with respect to the orientation on $\Gamma$.
 We assume that $\partial \Omega$ is a smooth surface. Hence, there are no singularities of the boundary and the jump $\jump{a\cdot \nu}:=[a\cdot \nu]^{+}+[a\cdot \nu]^{-}=([a]^{+}-[a]^{-})\cdot \nu$ of $a$ across the joining
curve $\partial\Gamma$ arises only as consequence of possible discontinuities of the corresponding quantities which follows from the prescribed  boundary conditions on $\Gamma$ and $\partial \Omega\setminus \overline{\Gamma}$, where
$$[\,\cdot\,]^-:=\hspace*{0cm}\dd\lim\limits_{\footnotesize{\begin{array}{c}x\in\partial \Omega\setminus \overline{\Gamma}\\
 \ x\rightarrow \partial \Gamma\end{array}}}\hspace*{0cm} [\,\cdot\,], \qquad \qquad [\,\cdot\,]^+:=\hspace*{-0.2cm}\dd\lim\limits_{\footnotesize{\begin{array}{c}x\in \Gamma\\
 \ x\rightarrow \partial \Gamma\end{array}}}\hspace*{-0.2cm} [\,\cdot\,].$$

  The usual Lebesgue spaces of square integrable functions, vector or tensor fields on $\Omega$ with values in $\mathbb{R}$, $\mathbb{R}^3$ or $\mathbb{R}^{3\times 3}$, respectively will be denoted by $L^2(\Omega)$. Moreover, we introduce the standard Sobolev spaces \cite{Adams75,Raviart79,Leis86}
\begin{align}
\begin{array}{ll}
{\rm H}^1(\Omega)=\{u\in L^2(\Omega)\, |\, {\rm grad}\, u\in L^2(\Omega)\}, &\|u\|^2_{{\rm H}^1(\Omega)}:=\|u\|^2_{L^2(\Omega)}+\|{\rm grad}\, u\|^2_{L^2(\Omega)}\,,\vspace{1.5mm}\\
{\rm H}({\rm curl};\Omega)=\{v\in L^2(\Omega)\, |\, {\rm curl}\, v\in L^2(\Omega)\},   &\|v\|^2_{{\rm H}({\rm curl};\Omega)}:=\|v\|^2_{L^2(\Omega)}+\|{\rm curl}\, v\|^2_{L^2(\Omega)}\, ,\vspace{1.5mm}\\
{\rm H}({\rm div};\Omega)=\{v\in L^2(\Omega)\, |\, {\rm div}\, v\in L^2(\Omega)\}, &\|v\|^2_{{\rm H}({\rm div};\Omega)}:=\|v\|^2_{L^2(\Omega)}+\|{\rm div}\, v\|^2_{L^2(\Omega)}\, ,
\end{array}
\end{align}
of functions $u$ or vector fields $v$, respectively. Furthermore, we introduce their closed subspaces $H_0^1(\Omega)$,  ${\rm H}_0({\rm curl};\Omega)$ as completion under the respective graph norms of the scalar valued space $C_0^\infty(\Omega)$ (the set of infinitely
differentiable functions with compact support in $\Omega$).
  For vector fields $v$ with components in ${\rm H}^{1}(\Omega)$, i.e.
$
v=\left(    v_1, v_2, v_3\right)^T\, , v_i\in {\rm H}^{1}(\Omega),
$
we define
$
 \nabla \,v=\left(
   (\nabla\,  v_1)^T,
    (\nabla\, v_2)^T,
    (\nabla\, v_3)^T
\right)^T
$, while for tensor fields $P$ with rows in ${\rm H}({\rm div}\,; \Omega)$, i.e.
$
P=\left(
    P_1^T,
    P_2^T,
    P_3^T
\right)$,  $P_i\in {\rm H}({\rm div}\,; \Omega)$
we define
 $ {\rm Div}\,P=\left(
   {\rm div}\, P_1,
    {\rm div}\,P_2,
    {\rm div}\,P_3
\right)^T$.
The corresponding Sobolev-spaces will be denoted by
$
  H^1(\Omega), H^1(\Div;\Omega).
 $

\section{The classical indeterminate couple stress model}\label{sectaxlb}\setcounter{equation}{0}

We are now shortly re-deriving the classical equations based on the $\nabla [\axl (\skw \nabla u)]$-formulation of the indeterminate couple stress model.  This part does not contain new results, see, e.g., \cite{MadeoGhibaNeffMunchKM} for further details, but is included for setting the stage for this contribution.

The linear isotropic indeterminate couple stress problem can be viewed as a minimization problem
\begin{align}
\int_\Omega\Big[ \mu\, \|\sym \nabla u\|^2+\frac{\lambda}{2}\, [\tr(\nabla u)]^2+W_{\rm curv}(\nabla \curl u)-\langle f, u\rangle \Big]dv \quad \rightarrow \text{min. w.r.t.} \quad  u,
\end{align}
subjected to geometric and mechanical boundary conditions, in part depending on the form of $W_{\rm curv}(\nabla \curl u)$, which will be specified later on.

In the following, in order to place the subject in the literature, we outline  some curvature energies proposed in different isotropic second gradient elasticity properly models:
\begin{itemize}
\item {\bf the indeterminate couple stress model} (Grioli-Koiter-Mindlin-Toupin model) \cite{Grioli60,Aero61,Koiter64,Mindlin62,Toupin64,Sokolowski72,grioli2003microstructures} in which the higher derivatives (apparently)  appear only through derivatives of the infinitesimal continuum rotation $\curl u$.  Hence, the curvature energy  has  the equivalent forms
\begin{align}\label{KMTe}
W_{\rm curv}( \nabla \curl\, u)&=\mu\, L_c^2\,\left[\frac{\alpha_1}{4}\, \|\sym \nabla \curl\, u\|^2+\frac{\alpha_2}{4}\,\| \skw \nabla \curl\, u\|^2\right]\notag\\
&=\mu\, L_c^2\,\left[{\alpha_1}\, \|\sym\nabla[\axl(\skw  \nabla u)]\|^2+{\alpha_2}\,\| \skw \nabla[\axl(\skw  \nabla u)]\|^2\right]\\
&=\mu\, L_c^2\,\left[\frac{\alpha_1}{4}\, \|\dev \sym \nabla \curl\, u\|^2+\frac{\alpha_2}{4}\,\| \skw \nabla \curl\, u\|^2\right].\notag
\end{align}
We  remark that the spherical part of the couple stress tensor remains {\bf indeterminate} since $\tr(\nabla \curl u)={\rm div} (\curl u)=0$. In order to prove the pointwise uniform positive definiteness it is assumed following \cite{Koiter64}, that ${\alpha_1}>0, {\alpha_2}>0$. Note that pointwise uniform positivity is often assumed \cite{Koiter64} when deriving analytical solutions for simple boundary value problems because it allows to invert the couple stress-curvature relation. It is clear  that pointwise positive definiteness is not necessary for well-posedness \cite{Neff_JeongMMS08}.
Mindlin \cite[p. 425]{Mindlin62} explained the relations between  Toupin's constitutive equations \cite{Toupin62} and  Grioli's \cite{Grioli60} constitutive equations and concluded that the obtained equations in the linearized theory are identical, since the extra constitutive parameter $\eta^\prime$ of Grioli's model does not explicitly appear in the equations of motion  but enters only the boundary conditions. The same extra constitutive coefficient appears in  Mindlin and Eshel's version \cite{Mindlin68}.

\item
  {\bf the modified - symmetric couple stress model - the conformal model}.  On the other hand, in the conformal case  \cite{Neff_Jeong_IJSS09,Neff_Paris_Maugin09} one may consider that $\alpha_2=0$, which makes the couple stress tensor $\widetilde{m}$ symmetric and trace free.  This conformal curvature case has been
considered  by Neff in  \cite{Neff_Jeong_IJSS09}, the curvature energy having the form
\begin{align}
W_{\rm curv}( \nabla \curl\, u)&=\mu\, L_c^2\,\frac{\alpha_1}{4}\, \|\sym \nabla \curl\, u\|^2=\mu\, L_c^2\,\alpha_1\, \|\dev \sym \nabla[\axl(\skw  \nabla u)\|^2.
\end{align}
Indeed, there are two major reasons uncovered  in \cite{Neff_Jeong_IJSS09} for using the modified couple stress model. First, in order to avoid singular stiffening behaviour for smaller and smaller samples in bending \cite{Neff_Jeong_bounded_stiffness09} one has to take $\alpha_2=0$. Second, based on a homogenization procedure invoking an intuitively appealing  natural ``micro-randomness" assumption (a strong statement of microstructural isotropy) requires conformal invariance, which is again equivalent to $\alpha_2=0$. Such a model is still well-posed \cite{Neff_JeongMMS08} leading to existence and uniqueness results with only one additional material length scale parameter, while it is {\bf not} pointwise uniformly positive definite.

\item {\bf the skew-symmetric couple stress model - the non-conformal model}.
  {Hadjesfandiari and Dargush} strongly advocate \cite{hadjesfandiari2011couple,hadjesfandiari2013fundamental,hadjesfandiari2013skew} the opposite extreme case, $\alpha_1=0$ and $\alpha_2>0$, i.e. they used the  curvature  energy
\begin{align}
W_{\rm curv}( \nabla \curl\, u)&=\mu\, L_c^2\,\frac{\alpha_2}{4}\, \|\skw \nabla({\rm curl}\, u)\|^2=\mu\, L_c^2\,\alpha_2\,  \|\axl \skw \nabla({\rm curl}\, u)\|^2=4\, \mu\, L_c^2\, \alpha_2\, \|{\rm curl}\,({\rm curl}\, u)\|^2\notag\notag.
\end{align}
In that model the nonlocal force stresses and the couple stresses are both assumed to be skew-symmetric. Their reasoning, based in fact on an incomplete understanding of boundary conditions is critically discussed in this paper  and generally refuted, while mathematically it is well-posed, which will also be shown.
 \end{itemize}

\subsection{Equilibrium and constitutive equations}
Taking free variations $\delta u\in C^\infty(\Omega)$ in the energy $W(\sym \nabla u,\nabla  \curl u)=W_{\rm lin}(\sym \nabla u)+W_{\rm curv}(\nabla  \curl u)$, where
\begin{align}\label{gradeq11}
W_{\rm lin}(\sym \nabla u)=&\mu\, \|\sym \nabla u\|^2+\frac{\lambda}{2}\, [\tr(\nabla u)]^2=\mu\, \|\dev \sym \nabla u\|^2+\frac{2\, \mu+3\,\lambda}{6}\, [\tr(\nabla u)]^2,\notag\\
W_{\rm curv}(\nabla  \curl u)=& \mu\,L_c^2\,[\alpha_1\, \|\dev\sym \nabla [\axl (\skw \nabla u)]\|^2+
\alpha_2\, \|\skw\nabla [\axl (\skw \nabla u)]\|^2],
\end{align}
 we obtain the virtual work principle
 \begin{align}\label{gradeq211}
\frac{\rm d}{\rm dt}\int_\Omega W(\nabla u+t\,\nabla \delta u)\,dv\Big|_{t=0}=&\int_\Omega\bigg[ 2\mu\,\langle\sym \nabla u, \sym \nabla \delta u \rangle+\lambda \tr(\nabla u)\,\tr( \nabla   \delta u)\notag\\&+\mu\,L_c^2\,[2\,  \alpha_1\, \langle \dev\sym \nabla [\axl (\skw \nabla u)],\dev\sym \nabla [\axl (\skw \nabla \delta u)]\rangle \\&+
2\, \alpha_2\, \langle \skw\nabla [\axl (\skw \nabla u)],\skw\nabla [\axl (\skw \nabla \delta u)]\rangle]+\langle f,\delta u\rangle \bigg]\, dv =0,\notag
\end{align}
where $f$ denotes the body force density.

The classical divergence theorem
 leads to
\begin{align}\label{germaneq311}
\int_\Omega\langle \Div (\sigma-\widetilde{\tau})+f, \delta u \rangle \, dv\underbrace{-\int_{\partial \Omega}\langle (\sigma-\widetilde{\tau}).\, n, \delta u\rangle \,dv
-\int_{\partial\Omega}\langle \widetilde{ {m}}.\, n, \axl (\skw \nabla \delta u) \rangle \, da}_{\tiny \text{the virtual power work of the surface forces}}=0,
\end{align}
where
 \begin{align}\label{consteq}
  \widetilde{\sigma}_{\rm total}&=\sigma-\widetilde{\tau}\not\in{\rm Sym}(3)\qquad\qquad\qquad\qquad\qquad\qquad\qquad\qquad\qquad\ \ \,  \text{total force-stress tensor} \vspace{1.2mm}\notag\\
 \sigma&=2\, \mu \, \sym \nabla u+\lambda \, \tr(\nabla u)\id \in {\rm Sym}(3)\qquad\qquad\qquad\qquad\quad\quad\ \,\text{local force-stress tensor} \vspace{1.2mm}\notag\\
  \widetilde{\tau}&=\dd\frac{1}{2}\anti
{\rm Div}[\widetilde{ {m}}]\in \so(3),\qquad\qquad\qquad\qquad\qquad\qquad\qquad\qquad \text{nonlocal force-stress tensor} \vspace{1.2mm}\\
\widetilde{ {m}}&=\mu\,L_c^2\,[{\alpha_1}\sym \nabla (\curl u)+ {\alpha_2}\,\skw\nabla (\curl u)]\qquad\qquad\qquad\quad \text{couple stress tensor}\vspace{1.2mm}\notag\\
&=\mu\,L_c^2\,[{\alpha_1}\dev\sym \nabla (\curl u)+ {\alpha_2}\,\skw\nabla (\curl u)]\text{}\vspace{1.2mm}\notag\\
&=\mu\,L_c^2\,[2\,{\alpha_1}\dev\sym \nabla[\axl (\skw \nabla u)]+2\, {\alpha_2}\,\skw\nabla [\axl (\skw \nabla u)]],\notag
\end{align}
and $n$ is the unit outward normal vector at the surface $\partial \Omega$.
 The  equilibrium equation are therefore
 \begin{align}\label{ec11}
 \Div \,\widetilde{\sigma}_{\rm total}+f=0.
 \end{align}
Note that  the local force-stress tensor $\sigma$ is always symmetric, the nonlocal force-stress  tensor $\widetilde{\tau}$  is automatically skew-symmetric, while the  second order hyperstress tensor (the couple stress tensor) $\widetilde{m}$  may or may not be symmetric, depending on the material parameters. The asymmetry of force stress is a hidden constitutive assumption, compare to \cite{NeffGhibaMadeoMunch}.

  \subsection{Boundary conditions}

  \subsubsection{Classical (incomplete) Grioli-Koiter-Mindlin-Tiersten boundary conditions}
Since  the variations at the boundary and the interior can be assigned independently, see \eqref{germaneq311}, we must have:
\begin{align}\label{integralasuprafata}
&-\int_{\partial \Omega}\langle (\sigma-\widetilde{\tau}).\, n, \delta u\rangle \,da
-\int_{\partial\Omega}\langle \widetilde{ {m}}.\, n, \axl(\skw \nabla \delta u) \rangle\, da=
0\qquad\text{or equivalently}\qquad\\
&-\int_{\partial \Omega}\langle (\sigma-\widetilde{\tau}).\, n, \delta u\rangle \,da
-2\,\int_{\partial\Omega}\langle \widetilde{ {m}}.\, n, \curl \delta u \rangle\, da=
0.\notag
\end{align}
This suggests 6 possible independent prescriptions of mechanical boundary conditions; three for the  normal components of the total force stress $(\sigma-\widetilde{\tau}).n$ and three for the normal components of the couple stress tensor.
 The possible Dirichlet boundary conditions on $\Gamma\subset \partial \Omega$  seem to be the 6 conditions\footnote{as indeed proposed by Grioli \cite{Grioli60} in concordance with the Cosserat kinematics for independent fields of displacements and microrotation.}
 \begin{align}\label{bc11}
 u=\widetilde{u}, \qquad \axl(\skw \nabla  u)=\widetilde{w} \quad (\text{or  equivalently} \ \ \curl u=2\,\widetilde{w}),
 \end{align}
 for two given functions $\widetilde{u}, \widetilde{w}:\mathbb{R}^3\rightarrow\mathbb{R}^3$ at the open subset  $\Gamma\subset\partial \Omega$ of the boundary (3+3 boundary conditions).

However, following Koiter  we note
\begin{remark}\label{remarkcc} {\rm [independent variations and curl] }Assume $u\in C^\infty(\overline{\Omega})$ and $u\Big|_{{\Gamma}}$ is known. Then ${\rm curl}\Big|_{{\Gamma}}$  exists and for any open subset ${\Gamma}\subset \partial \Omega$ the integral $\int_{{\Gamma}}\langle {\rm curl}
 \,u, n\rangle\, da$ is already known by Stokes theorem, while $\int_{{\Gamma}}\langle {\rm curl}\, u,\tau\rangle \,da$  is still free, where $\tau$ is any tangential vector field on  the open set ${\Gamma}\subset \partial \Omega$.
Only the two tangential components of ${\rm curl}\, u$  may be independently prescribed on an open subset of the boundary.
\end{remark}

Already, Mindlin and Tiersten \cite{Mindlin62} have rightly  remarked that also in this formulation  only 5 mechanical boundary conditions  can  be prescribed. They rewrote \eqref{integralasuprafata} in a further separated form
\begin{align}\label{integralasuprafata0}
-\int_{\partial \Omega}\langle (\sigma-\widetilde{\tau}).\, n -\underbrace{\frac{1}{2} n\times \nabla[\langle (\sym \widetilde{ {m}}).n,n\rangle]}_{\begin{array}{c}\text{\tiny{performs already}}\vspace{-1.5mm}\\ \text{\tiny{work only against \ $\delta u$} }\end{array}}, \delta u\rangle \,da
-\int_{\partial\Omega}\langle (\id-n\otimes n)\,\widetilde{ {m}}.n,\!\!\!\!\!\!\!\!\underbrace{(\id-n\otimes n)\,[\axl(\skw \nabla \delta u)]}_{\begin{array}{c}\text{\tiny{cannot be assigned arbitrarily independent of  $\delta u$}}\vspace{-1.5mm}\\
\text{\tiny{since it still contains certain}}\vspace{-1.5mm}\\\text{\tiny{ "tangential" derivatives of $ \delta u\in C^\infty(\Omega)$}}
\end{array} }\!\!\!\!\!\!\!\!\!\!\!\rangle\,
 da\notag\\
 \\
 =-\int_{\partial \Omega}\langle (\sigma-\widetilde{\tau}).\, n -\underbrace{\frac{1}{2} n\times \nabla[\langle (\sym \widetilde{ {m}}).n,n\rangle]}_{\begin{array}{c}\text{\tiny{performs already}}\vspace{-1.5mm}\\ \text{\tiny{work only against \ $\delta u$} }\end{array}}, \delta u\rangle \,da
-2\,\int_{\partial\Omega}\langle (\id-n\otimes n)\,\widetilde{ {m}}.n,(\id-n\otimes n)\,[\curl \delta u]\rangle\,
 da=
0,\notag
\end{align}
 which already shows that the second term in \eqref{integralasuprafata} still contains contributions which perform work against $\delta u$, namely
 $\frac{1}{2} n\times \nabla[\langle (\sym \widetilde{ {m}}).n,n\rangle]$, while the remaining higher order term $(\id-n\otimes n)\,\widetilde{ {m}}.n$  performs work against combinations of second derivatives. Mindlin and Tiersten \cite{Mindlin62}  concluded that 3 boundary conditions derive from the first integral (correctly) and two other  from the second integral, since \cite[p.~432]{Mindlin62} ``the normal component of the couple stress vector [$\langle \widetilde{ {m}}.n,n\rangle=\langle \sym\widetilde{ {m}}.n,n\rangle$] on" $\partial \Omega$ ``enters only in the combination with the force-stress vector shown in the coefficient of" $\delta u$ ``in the surface integral (our first term on the right hand side of \eqref{integralasuprafata0}).

Therefore, Mindlin and Tiersten \cite{Mindlin62} concluded that the boundary conditions consists in:
\begin{itemize}
\item \textbf{Geometric boundary conditions} on $\Gamma\subset \partial \Omega$:
\begin{align}\label{bcme1}
 &\hspace{4.52cm}u\ =\ \widetilde{u}^0, \qquad \qquad \quad\qquad\quad\qquad\qquad \qquad   \qquad (3\  \text{bc})\notag\\
 &\hspace{0.06cm}\left\{
 \begin{array}{rcl}
(\id -n\otimes n).\,\axl(\skw \nabla  u)&=&\ \ \!\!\!(\id -n\otimes n).\,\axl(\skw \nabla  \widetilde{u}^0),\vspace{1.2mm}\\
\text{or}\qquad  (\id -n\otimes n).\,\curl  u&=&\!\!\!2\, (\id -n\otimes n).\,\curl  \widetilde{u}^0, \qquad \qquad \qquad   \qquad (2\  \text{bc})
\end{array}\right.
 \end{align}
  for  a given function $\widetilde{u}^0:\mathbb{R}^3\rightarrow\mathbb{R}^3$ at the boundary.
 The latter condition prescribes only the tangential component of $\axl(\skew \nabla u)$. Therefore, we may  prescribe only 3+2  independent boundary conditions.
\item  \textbf{Traction boundary conditions} on $\partial \Omega\setminus\overline{\Gamma}$:
\begin{align}\label{bcme2}
(\sigma-\widetilde{\tau}).\, n-\frac{1}{2} n\times \nabla[\langle (\sym \widetilde{ {m}}).n,n\rangle]&=\widetilde{t},\qquad\qquad \qquad\qquad\text{traction}\qquad\qquad \qquad\qquad\quad\text{ \ (3 bc)}\notag\\
(\id-n\otimes n)\,\widetilde{ {m}}.n&=(\id-n\otimes n)\,\widetilde{g}, \qquad \text{``double force normal traction" (2 bc)}
\end{align}
 for prescribed functions $\widetilde{t},\widetilde{g}:\mathbb{R}^3\rightarrow\mathbb{R}^3$ at the boundary.
\end{itemize}
However, while \eqref{bcme1} and \eqref{bcme2} correctly describe the maximal number of independent boundary conditions in the indeterminate couple stress model and these conditions have been rederived again and again by Yang et al. \cite{Yang02}, Park and Gao \cite{Park07}, \cite{lubarda2003effects}, etc. among others they are  not correct. This is explained in the following paragraphs.

\subsubsection{Complete (corrected) traction boundary conditions}\label{sectioncomplete}

 The indeterminate couple stress model is not simply obtained as a constraint Cosserat model \cite{Sokolowski72}, i.e. assuming that $\overline{A}=\axl(\skew \nabla u)$. In the indeterminate couple stress model the only independent kinematical degree of freedom is $u$. We understand that the indeterminate couple stress model constructed as a constraint Cosserat model represents at most an approximation of the  indeterminate couple stress model, in the sense that the boundary conditions are not correctly/completely considered. We remark that the quantity $\langle \widetilde{ {m}}.n,(\id-n\otimes n)[\axl(\skw \nabla \delta u)]\rangle$ does still contain contributions performing work against $\delta u$ alone (even though there is a projection $\id -n\otimes n$ involved), which can be assigned arbitrarily and are therefore somehow related to   independent variation  $\delta u$. This case is not similar to the Cosserat theory in which we assume a priori that  displacement $u$ and microrotation $\overline{A}\in\so(3)$ are independent kinematical degrees of freedom.

At this point, it must also be considered that the tangential trace of
the gradient of virtual displacement can be integrated by parts once
again and that the surface divergence theorem can be applied to this
tangential  part of $\nabla\delta{u}$. Let  ${n}$ be the unit normal vector
at a considered
surface point.
As it is well known from differential geometry, the projectors $\id -{n}\otimes{n}$ and ${n}\otimes{n}$ allow to split a given vector or tensor field in one part projected
on the plane tangent to the considered surface and one projected on
the normal to such surface (see also \cite{NthGrad,TheseSeppecher,dell2012beyond} for details)). For the sake of simplicity we  assume that $\partial\Omega$ is a smooth  surface of class $C^2$. As in the notation section, we consider the curve $\partial \Gamma$ which joins the open subsets $\Gamma$ and $\partial \Omega\setminus \Gamma$ of the boundary. Therefore, in our case and in our abbreviations, the surface divergence \cite[p.~58, ex.~7]{gurtin2010mechanics} reads:
\begin{equation}
\int_{\partial \Omega}\mathrm{Div}^{S}\left(v\right)da:=\int_{\partial \Omega}\langle \id-n\otimes n,\nabla\left((\id-n\otimes n)\cdot v\right)]\rangle\, da=\int_{\partial \Gamma}\jump{\:\langle v,{\nu}\rangle\text{ }}\,ds.\label{eq:Surface_Div_Th}
\end{equation}
for any field $v\in \mathbb{R}^3$.

 Indeed, using the surface divergence theorem, we have obtained in \cite{MadeoGhibaNeffMunchKM} that
\begin{align}\label{axldeltan}
-\int_{\partial \Omega}\langle (\sigma&-\widetilde{\tau}).\, n, \delta u\rangle \,da
-\int_{\partial \Omega}\langle \widetilde{ {m}}.\, n, \axl(\skw \nabla \delta u) \rangle\, da\notag\\
 =&
 -\int_{\partial \Omega}\langle (\sigma-\frac{1}{2}\anti
{\rm Div}[\widetilde{ {m}}]).\, n -\frac{1}{2} n\times \nabla[\langle (\sym \widetilde{ {m}}).n,n\rangle]\\
&\qquad\qquad\ -\frac{1}{2}\{\nabla[(\anti[(\id-n\otimes n)\widetilde{ {m}}.\, n])\, (\id -{n}\otimes{n})]:(\id -{n}\otimes{n}), \delta u\rangle \,da\notag\\& -\frac{1}{2}
\int_{\partial \Omega}\underbrace{\langle (\id -n\otimes n)\anti[(\id-n\otimes n)\widetilde{ {m}}.\, n].n,  \nabla \delta u.n \rangle}_{\tiny\begin{array}{c}\text{\tiny{completely}}\ {\tiny \delta u}\text{\tiny{-independent second}}\vspace{0mm}\\
\text{\tiny{order normal variation of the gradient}}
\end{array} }  da-\frac{1}{2}
\int_{\partial \Gamma}\langle \jump{\anti[(\id-n\otimes n)\widetilde{ {m}}.\, n]. \,\nu}, \delta u\rangle ds.\notag
\end{align}
Hence, there are indeed two terms
$$(\sigma-\frac{1}{2}\anti
{\rm Div}[\widetilde{ {m}}]).\, n -\frac{1}{2} n\times \nabla[\langle (\sym \widetilde{ {m}}).n,n\rangle]-\frac{1}{2}\nabla[(\anti[(\id-n\otimes n)\widetilde{ {m}}.\, n])(\id -{n}\otimes{n})]:(\id -{n}\otimes{n})$$ and $$\jump{\anti[(\id-n\otimes n)\widetilde{ {m}}.\, n]. \,\nu}$$
 which perform work against $\delta u$, while only the term
 $$(\id -n\otimes n)\anti[(\id-n\otimes n)\widetilde{ {m}}.\, n].n$$
  is  related solely   to the independent second order normal variation of the gradient  $ \nabla \delta u.n$.
This split of the boundary condition is not the one as obtained e.g. by Gao and Park \cite{park2008variational} and seems to be entirely new in the context of the indeterminate couple stress model.

Therefore, we need to adjoin on $\partial \Omega$ the following complete set of boundary conditions:
\begin{itemize}
\item Geometric (essential) boundary conditions on $\Gamma\subset\partial \Omega$:
 \begin{align}\label{bc1110}
u&=\widetilde{u}^0, \qquad \qquad \quad\quad\quad\quad\quad \quad\ \   \qquad\qquad\qquad\qquad (3\  \text{bc})\\
(\id-n\otimes n)(\nabla u).n&= (\id-n\otimes n)(\nabla \widetilde{u}^0).n,  \qquad \quad  \qquad\qquad\qquad\qquad (2\  \text{bc})\notag
 \end{align}
 where $\widetilde{u}^0:\mathbb{R}^{3}\rightarrow\mathbb{R}^{3}$ is a prescribed function (i.e. 3+2=5 boundary conditions), as in \cite{Mindlin62}.
\item Traction boundary conditions on $\partial \Omega\setminus\overline{\Gamma}$:
  \begin{align}\label{bc1001}\hspace{-1.3cm}
        \begin{array}{rcl}
(\sigma-\widetilde{\tau}).\, n -\frac{1}{2} n\times \nabla[\langle (\sym \widetilde{ {m}}).n,n\rangle] \hspace{4cm}&&\vspace{1.2mm}\\
-\frac{1}{2}\{\nabla[(\anti[(\id-n\otimes n)\widetilde{ {m}}.\, n])(\id-n\otimes n)]:(\id-n\otimes n) &=&t,\vspace{1.2mm}\\
  \dd(\id -n\otimes n)\anti[(\id-n\otimes n)\widetilde{ {m}}.\, n].n&=&(\id-n\otimes n)\,g
 \end{array}
&
 \begin{array}{r}
\\(3\  \text{bc})\vspace{1.2mm}\\
(2\  \text{bc})
 \end{array}
 \end{align}
  where  $t,  g:\mathbb{R}^{3}\rightarrow\mathbb{R}^{3}$ are prescribed functions on $\partial \Omega\setminus \overline{\Gamma}$.
 \item  Jump boundary conditions on $\partial \Gamma\subset \partial \Omega$:
 \begin{align}\begin{array}{rcl}
\jump{\anti[(\id-n\otimes n)\widetilde{ {m}}.\, n]. \,\nu}&=&\widetilde{\pi},\hspace{2.3cm}
 \end{array}
\qquad
  \qquad \qquad\qquad\qquad \begin{array}{r}
(3\  \text{bc})
 \end{array}
 \end{align}
 where $\widetilde{\pi}$ is prescribed on $\partial \Gamma$ and leads to 3 boundary conditions.
\end{itemize}

\section{The Hadjesfandiari and Dargush's postulate}\setcounter{equation}{0}
Let us now turn to  Hadjesfandiari and Dargush's far reaching claims.
 In the abstract of their  paper \cite{hadjesfandiari2011couple} Hadjesfandiari and Dargush write:
 \begin{quote}
 \textit{``By relying on the definition of admissible boundary conditions, the principle of virtual work and some
kinematical considerations, we establish the skew-symmetric character of the couple-stress tensor {\rm [}$\widetilde{m}${\rm ]} in
size-dependent continuum representations of matter. This fundamental result, which is independent of
the material behavior {\rm [e.g. isotropy]}, resolves all difficulties in developing a consistent couple stress theory."}
 \end{quote}
 \noindent
In their appendix of \cite[p. 1263-1264]{hadjesfandiari2013fundamental} they add:
\begin{quote}
\textit{``Therefore, the present determinate theory is not mathematically a special case of {\rm [the]} indeterminate {\rm [couple stress]} theory obtained by letting  {\rm [}$\alpha_1=0${\rm ]}. This is just a coincidence for the linear isotropic case where equations in both theories have some similarities. It should be realized that the determinate theory is not simply about fixing the constitutive equations for {\rm [the]} linear isotropic couple stress theory of elasticity.  The present size-dependent couple stress theory is the consistent couple stress theory in continuum mechanics. This has been achieved by discovering the skew-symmetric character of the couple-stress tensor {\rm [}$\widetilde{m}${\rm ]}. Mindlin, Tiersten and Koiter did not recognize the mean curvature tensor {\rm [}$\skw[\nabla{\rm curl}\, u]${\rm ]} as the consistent measure of deformation in continuum mechanics."}\footnote{In \cite[p. 1283, A.16]{hadjesfandiari2013fundamental} Hadjesfandiari and Dargush erroneously  take the strict positivity of curvature parameters $\alpha_1,\alpha_2>0$, used initially by Koiter, Mindlin and others, as belonging to the definition of the indeterminate couple stress model. This is clearly not the case. It can be shown that both limit cases $\alpha_1>0, \alpha_2=0$ (modified couple stress theory) and $\alpha_1=0, \alpha_2>0$ (Hadjesfandiari and Dargush choice) are mathematically admitted with the provision of using the attendant correct boundary conditions being defined by the solution space depending on the values of $\alpha_1, \alpha_2$, see Section  \ref{existHD}.}
\end{quote}

In our understanding this claim is completely unfounded. Assuming isotropic response, we treat their ``consistent and determinate" couple stress theory as the linear and isotropic accepted indeterminate couple stress model with constitutive parameters $\alpha_1=0$, $\alpha_2>0$  in \eqref{consteq}.

Turning to their most important claim regarding the skew-symmetry of the coule stress tensor $\widetilde{m}$ we will exhibit their line of thought. Their reasoning is based on their fundamental hypothesis that the normal component of the couple stress traction vector $\langle \widetilde{m}.n,n\rangle $ should vanish on any bounding surface of an arbitrary volume\footnote{They also wrote \cite[p.13]{hadjesfandiari2014evo}:
 \textit{``...the corresponding generalized force must be zero and, for the normal component of the surface moment-traction vector {\rm [}$\widetilde{m}.n${\rm ]}, we must enforce the condition {\rm [}$\langle \widetilde{m}.n,n\rangle=0${\rm ]}."}.}.  In the case $\alpha_1=0$ we have $\widetilde{m}\in\so(3)$ and $\langle \widetilde{m}.n,n\rangle =0$ is satisfied automatically.

\smallskip

\noindent In our notation the argument  of Hadjesfandiari and  Dargush is given as follows \cite[p. 12-13]{hadjesfandiari2010polar}:
\begin{quote}
\textit{``From kinematics, since {\rm [$\omega^{nn}:=\langle {\rm curl} u,n\rangle$]} is not an independent generalized degree of freedom, its apparent
corresponding generalized force must be zero. Thus, for the normal component of the surface
couple vector {\rm [$\widetilde{m}.n$]}, we must enforce the condition
\[
[\langle \widetilde{m}.n,n\rangle=0].
\]
{\rm [...]} we notice that
the energy equation can be written for any arbitrary volume with arbitrary surface within the
body. Therefore, for any point on any arbitrary surface with unit normal $n$, we must have
\[
[\langle \widetilde{m}.n,n\rangle=0].
\]
Since $n_in_j$ is symmetric and arbitrary in {\rm [..]}, {\rm [$\widetilde{m}$]}  must be skew-symmetric. Thus,
{\rm [$\widetilde{m}^T=-\widetilde{m}$.]}
This is the fundamental property of the couple-stress tensor in polar continuum mechanics,
which has not been recognized previously."} (see also \cite[p. 2500]{hadjesfandiari2011couple}).
\end{quote}

\noindent Let us interpret  this statement. We rephrase it for our purpose through  the formulation of an implicit
\begin{quote}
\vspace{-0.5cm}
\begin{postulate}{\rm [Hadjesfandiari and Dargush as we understand it]}\label{postulate} In any extended continuum model only the total force stress traction vector $(\sigma-\widetilde{\tau}).\, n$ should perform work against the independent virtual displacement $\delta u$ at the part $\partial \Omega\setminus{\overline{\Gamma}}$ of the boundary $\partial \Omega$, where traction boundary conditions are applied.
\end{postulate}
\end{quote}

\begin{remark}
 Incidentally, this postulate is automatically satisfied in classical elasticity, Cosserat and micromorphic models \cite{Neff_Forest_jel05}, since the possible variations of the field variables are independent anyway. Whether such a postulate can be satisfied in a higher gradient continuum is the concern of Hadjesfandiari and Dargush. We will see that this is not possible.
 \end{remark}
Hadjesfandiari and Dargush apply this postulate to the classical (incomplete) Mindlin and Tiersten's format of the boundary conditions, namely   \eqref{bcme1} and \eqref{bcme2}. Inspection of the indeterminate couple stress model within the framework of these  erroneous classical boundary conditions, see e.g. \eqref{inspHD} and \eqref{integralasuprafata01}, shows that choosing $\sym \widetilde{m}=0$ is indeed  sufficient for this postulate to be satisfied.

Hadjesfandiari and Dargush accept
 \begin{align}
\sigma.n=\sigma_n,\qquad \widetilde{m}.n=m_n,
 \end{align}
 and they also realize correctly that the number of geometric or mechanic boundary conditions is 5 since the tangential component of the test function $\delta u$ cannot independently be varied, which is by now well established. To this aim, they spilt the term
 \begin{align}\label{HDpw}
 \int_{\partial V}\langle \widetilde{m}.n, \axl\skew \nabla \delta u\rangle \,da&=2\,\int_{\partial V}\langle \widetilde{m}.n, \curl \delta u\rangle\, da\\
 &=2\,\int_{\partial V}\underbrace{\langle (n\otimes n)\, \widetilde{m}.n}_{\text{\tiny normal part}}, \curl \delta u\rangle\, da+2\,\int_{\partial V}\underbrace{\langle (\id-n\otimes n)\,\widetilde{m}.n}_{\text{\tiny tangential part}}, \curl \delta u\rangle\, da\notag
 \\
 &=2\,\int_{\partial V}\langle\widetilde{m}.n, \underbrace{(n\otimes n).\,\curl \delta u\rangle}_{\text{\tiny normal part}}\, da+2\,\int_{\partial V}\langle\widetilde{m}.n,\underbrace{(\id-n\otimes n).\,\curl \delta u\rangle}_{\text{\tiny tangential part}}\, da\notag
 \end{align}
 on any arbitrary subdomain $V\subset \Omega$, into its tangential and normal part. They  observe that the normal part
  \begin{align}\label{HDPw2}
 \int_{\partial V}\langle (n\otimes n)\, \widetilde{m}.n, \curl \delta u\rangle\, da=\int_{\partial V}\underbrace{\langle (\sym \widetilde{ {m}}).n,n\rangle}_{\text{\tiny generalized force}} \,  \!\!\!\underbrace{\langle \curl \delta u, n\rangle}_{\begin{array}{c}\vspace{-6mm}\\\text{\tiny no independent }\vspace{-2mm}\\
 \text{\tiny degree of freedom $(\surd)$}\end{array}}\, da
 \end{align}
 cannot be prescribed independently of $\delta u$, since indeed their $\omega^{nn}:=\langle \curl \delta u, n\rangle$  cannot  independently be prescribed due to Stokes theorem. To avoid a somehow felt inconsistency, they state that the corresponding generalized force must be zero and, for the normal component of the surface moment-traction vector, they enforce accordingly the (misguided)  condition
 \begin{align}
 \underbrace{\langle (\sym \widetilde{ {m}}).n,n\rangle}_{\text{\tiny generalized force}} =0
 \end{align}
 on any arbitrary subdomain $V\subset \Omega$ having the boundary $\partial V$.

The equilibrium equations considered by Hadjesfandiari and Dargush ($\alpha_1=0$)\cite{hadjesfandiari2011couple} read therefore
\begin{align}\label{HD1}
 \Div \widetilde{\sigma} _{\rm total}+f=0,
 \end{align}
 where the total force stress is given by
 \begin{align}\label{HD2}
  \widetilde{\sigma}_{\rm total}&=\sigma-\widetilde{\tau}\not\in{\rm Sym}(3)\qquad\qquad\qquad\qquad\qquad\qquad\qquad\qquad\qquad\qquad\qquad\ \ \,  \text{total force-stress tensor} \vspace{1.2mm}\notag\\
 \sigma&=2\, \mu \, \sym \nabla u+\lambda \, \tr(\nabla u)\id \in {\rm Sym}(3)\qquad\qquad\qquad\qquad\quad\quad\qquad\qquad\ \,\text{local force-stress tensor} \vspace{1.2mm}\notag\\
  \widetilde{\tau}&=\dd\frac{1}{2}\anti
{\rm Div}[\widetilde{ {m}}]\in \so(3),\qquad\qquad\qquad\qquad\qquad\qquad\qquad\qquad\qquad\qquad \text{nonlocal force-stress tensor} \vspace{1.2mm}\\
\widetilde{ {m}}&=\mu\,L_c^2\, {\alpha_2}\,\skw\nabla (\curl u)=2\,\mu\,L_c^2\, {\alpha_2}\,\skw\nabla [\axl (\skw \nabla u)]]\in \so(3)\quad\  \text{couple stress tensor}.\notag
\end{align}

 To the  equilibrium equation, Hadjesfandiari-Dargush  adjoin on $\partial \Omega$  the following boundary conditions
\begin{itemize}
\item \textbf{Geometric boundary conditions} on $\Gamma\subset \partial \Omega$:
\begin{align}\label{HD3}
 &\hspace{4.45cm}u\ =\ \widetilde{u}^0, \notag\\
 &\hspace{0.06cm}\left\{
 \begin{array}{rcl}
(\id -n\otimes n).\axl(\skw \nabla  u)&=&\ \ \!\!\!(\id -n\otimes n).\axl(\skw \nabla  \widetilde{u}^0),\\
\text{or}\qquad  (\id -n\otimes n).\,\curl  u&=&\!\!\!2\, (\id -n\otimes n).\,\curl  \widetilde{u}^0,
\end{array}\right.
 \end{align}
 for  a given function $\widetilde{u}^0:\mathbb{R}^3\rightarrow\mathbb{R}^3$ at the boundary.
 The latter condition prescribes only the tangential component of $\axl(\skew \nabla u)$. Therefore, they prescribe(correctly) only 3+2  independent boundary conditions.

 We may consider the equivalent geometric boundary conditions
 \begin{align}\label{HD30}
 \begin{array}{rclll}
 u&=&\widetilde{u}_0 &\text{on}&\Gamma,\\
 (\id-n\otimes n)\,\nabla u.\,n&=& (\id-n\otimes n)\,\nabla \widetilde{u}_0.\,n &\text{on} &\Gamma,
 \end{array}
 \end{align}
 where $\widetilde{u}_0:\mathbb{R}^3\rightarrow\mathbb{R}^3$  is given, i.e. 3+2 boundary conditions.
\item  \textbf{Traction boundary conditions} on $\partial \Omega\setminus\overline{\Gamma}$:
\begin{align}
\text{The Hadjesfandiari-Dargush-choice:}\qquad
\begin{array}{rclll}\label{HBBC}
 (\sigma+\widetilde{\tau}).\, n&=&\widetilde{t}& \text{on}& \partial \Omega\setminus\overline{\Gamma},\\
 (\id-n\otimes n)\widetilde{ {m}}.n&=&(\id-n\otimes n).\, \widetilde{h} & \text{on}&\partial \Omega\setminus\overline{\Gamma},
 \end{array}
\end{align}
where $\widetilde{t},\widetilde{h}:\mathbb{R}^3\rightarrow\mathbb{R}^3$ are given.
\end{itemize}

\section{No reason for the Hadjesfandiari-Dargush formulation with skew-symmetric couple stress tensor}\label{HDF}\setcounter{equation}{0}

Let us try to follow the argument of Hadjesfandiari and Dargush (bona fide).  It follows that the split considered by
 Hadjesfandiari-Dargush is not complete, since they nowhere do use the surface divergence theorem and they do not prescribe fully independent geometrically boundary conditions. We see, on the contrary, that, when integrated over $\partial \Omega$, even upon the Hadjesfandiari and Dargush restriction the  remaining tangential part from \eqref{HDpw}, namely
\begin{align}
 \int_{\partial \Omega}\underbrace{\langle (\id-n\otimes n)\widetilde{m}.n, \curl \delta u\rangle}_{\text{\tiny tangential part}} da=&\quad\frac{1}{4} \int_{\partial \Omega}\Big\{\langle \nabla[(\anti(\widetilde{ {m}}.\, n))\,(\id-n\otimes n)]: (\id-n\otimes n), \delta u\rangle \\
 &\qquad\qquad\qquad+
\underbrace{\langle (\id -n\otimes n)\anti(\widetilde{ {m}}.\, n).n,  \nabla \delta u.n \rangle}_{\begin{array}{c}\vspace{-6mm}\\\text{\tiny completely} \,\mbox{\tiny $\delta u$}\text{\tiny-independent second order}\vspace{-2mm}\\
\text{\tiny normal variation of gradient}
\end{array} }\Big\} \,da\notag,
 \end{align}
still contains some parts with independent degrees of freedom, performing work against the normal derivative $\nabla \delta u. n$.

Further on, we explain this in more detail. Looking at the Mindlin and Tiersten approach, see also \cite{MadeoGhibaNeffMunchKM},  in the framework of Hadjesfandiari-Dargush's assumption $\widetilde{m}\in\so(3)$, we get
\begin{align}\label{inspHD}
\langle\widetilde{ {m}}.\,n, \axl(\skw \nabla \delta u)\rangle&=
\langle (\id-n\otimes n)\,\widetilde{ {m}}.n,\axl(\skw \nabla \delta u)\rangle\notag\\&\quad+\frac{1}{2}\langle n,\curl[\!\!\!\!\!\!\!\!\!\!\!\!\underbrace{\langle (\sym \widetilde{ {m}}).n,n\rangle}_{\begin{array}{c}\vspace{-6mm}\\ \text{\tiny zero generalized force}\vspace{-2mm}\\
\text{\tiny according to}\vspace{-2mm}\\\text{\tiny Hadjesfandiari-Dargush's assumption}\end{array}}\!\!\!\!\!\!\!\!\!\!\!\!\!\!\!\!\, \delta u]\rangle
-\frac{1}{2}\underbrace{\langle n\times \nabla[\langle (\sym \widetilde{ {m}}).n,n\rangle]}_{\begin{array}{c}\vspace{-6mm}\\
\text{\tiny performs work against} \ \mbox{\tiny $\delta u$}\vspace{-2mm}\\
\text{\tiny and should therefore vanish}\vspace{-2mm}\\
\text{\tiny corresponding to their Postulate \ref{postulate}}
\end{array}}, \delta u\rangle,\\
&=
\langle (\id-n\otimes n)\,\widetilde{ {m}}.n,\axl(\skw \nabla \delta u)\rangle=\frac{1}{2}
\langle (\id-n\otimes n)\,\widetilde{ {m}}.n,\curl \delta u\rangle.\notag
\end{align}
Therefore \eqref{integralasuprafata0} can immediately be written as
\begin{align}\label{integralasuprafata01}
-\int_{\partial \Omega}\langle \Big\{(\sigma-\widetilde{\tau}).\, n, \delta u\rangle\, da
-\int_{\partial\Omega}\langle(\id-n\otimes n)\, \widetilde{ {m}}.n,(\id-n\otimes n)\,[\axl(\skw \nabla \delta u)]\rangle\,
 da=
0,
\end{align}
which is used and accepted by Hadjesfandiari and Dargush, see \eqref{HDpw}. We comprehend their curvature parameter choice (based on the incomplete format of the independent boundary conditions): normal tractions would be automatically completely separated into pure total force-stress tractions and pure couple stress tractions.

 However, after integration, the second term from \eqref{integralasuprafata01} $\langle(\id-n\otimes n)\, \widetilde{ {m}}.n,(\id-n\otimes n)\,[\axl(\skw \nabla \delta u)]\rangle$ leads to two new quantities: one performing work against the normal derivative $\nabla \delta u. n$ and one still performing work against  $\delta u$. This means that the assumption of Hadjesfandiari and Dargush does not remove all quantities which may also perform work against $\delta u$, besides  the total force stress tensor $(\sigma-\widetilde{\tau}).n$. This fact follows as a direct consequence of \eqref{axldeltan}, since, also in the Hadjesfandiari-Dargush formulation similar to \cite{MadeoGhibaNeffMunchKM,MadeoGhibaNeffMunchCRM}, it is  possible for us to use the surface divergence theorem  and   to obtain \begin{align}
-\int_{\partial \Omega}\langle (\sigma&-\widetilde{\tau}).\, n, \delta u\rangle \,da
-\int_{\partial\Omega}\langle \widetilde{ {m}}.\, n, \axl(\skw \nabla \delta u) \rangle\, da\notag\\
 =&
 -\int_{\partial \Omega}\langle (\sigma-\widetilde{\tau}).\, n -\frac{1}{2}\nabla[(\anti[(\id-n\otimes n)\widetilde{ {m}}.\, n])(\id-n\otimes n)]:(\id-n\otimes n), \delta u\rangle \,da\notag\\& -\frac{1}{2}
\int_{\partial \Omega}\langle (\id -n\otimes n)\anti[(\id-n\otimes n)\widetilde{ {m}}.\, n].n,  \nabla \delta u.n \rangle \,da\\&-\frac{1}{2}
\int_{\partial\Gamma}\langle  \jump{\anti((\id-n\otimes n)\widetilde{ {m}}.\, n)}.\, \nu, \delta u\rangle\, ds\notag,\notag
\\
 =&
 -\int_{\partial \Omega}\langle (\sigma-\widetilde{\tau}).\, n -\frac{1}{2}\nabla[(\anti[\widetilde{ {m}}.\, n])(\id-n\otimes n)]:(\id-n\otimes n), \delta u\rangle \,da\notag\\& -\frac{1}{2}
\int_{\partial \Omega}\langle (\id -n\otimes n)\anti[\widetilde{ {m}}.\, n].n,  \nabla \delta u.n \rangle \,da\notag-\frac{1}{2}
\int_{\partial\Gamma}\langle  \jump{\anti(\widetilde{ {m}}.\, n)}.\, \nu, \delta u\rangle \,ds\notag,\notag
\end{align}
for all variations $ \delta u\in C^\infty(\Omega)$, where we have used that $\langle \widetilde{m}.\, n,n\rangle=0$ implies
 $$(\id-n\otimes n)\widetilde{ {m}}.\, n=\widetilde{ {m}}.\, n-n\, \langle \widetilde{m}.\, n,n\rangle=\widetilde{ {m}}.\, n.$$
  We consider this last representation as the only correct form following the Hadjesfandiari-Dargush assumption $\langle \widetilde{m}.\, n,n\rangle=0$ made in \eqref{HDPw2}.

Similar to the correction to the  Mindlin and Tiersten's approach with the specification that $\widetilde{m}\in \so(3)$ (which is now seen to be not necessary but possible), we arrive rather, upon the Hadjesfandiari-Dargush assumption $\langle \widetilde{m}.\, n,n\rangle=0$, at the traction boundary condition
 \begin{align}\label{HBBC1}
 \hspace{-0.5cm}
        \begin{array}{rcl}
 [(\sigma-\widetilde{\tau}).\, n -\frac{1}{2}\nabla[(\anti[(\id-n\otimes n)\widetilde{ {m}}.\, n])](\id -{n}\otimes{n})]:(\id -{n}\otimes{n})\,(x) &=&\widetilde{g}(x),\vspace{1.2mm}\\
  \dd[(\id -n\otimes n)\anti[(\id-n\otimes n)\widetilde{ {m}}.\, n].n]\,(x)&=&[(\id-n\otimes n)\,\widetilde{s}]\,(x)
 \end{array}
 \
 \begin{array}{r}
(3\  \text{bc})\vspace{1.2mm}\\
(2\  \text{bc})
 \end{array}
 \end{align}
 on $\partial \Omega\setminus {\Gamma}$, while on $\partial \Gamma$ we have to prescribe  the jump conditions
 \begin{align}
 \hspace{4.5cm}
 \begin{array}{rcl}
\dd\{\jump{\anti[(\id-n\otimes n)\widetilde{ {m}}.\, n]}.\,\nu\}(x)&=&\widetilde{\pi}(x),
 \end{array}
 \quad
\quad\quad\qquad\
 \begin{array}{r}
(3\  \text{bc})
 \end{array}\notag
 \end{align}

From \eqref{HBBC1},  we see finally that their intended  response is not satisfied:
\begin{remark}
Assuming that $\langle (\sym \widetilde{ {m}}).n,n\rangle=0$ (the Hadjesfandiari-Dargush assumption) implies only that
\begin{align}
\int_{\partial \Omega}\langle \nabla[(\anti[(n\otimes n)\widetilde{ {m}}.\, n])(\id -{n}\otimes{n})]:(\id -{n}\otimes{n}), \delta u\rangle \,da=0.
\end{align}
However, from $$\nabla[(\anti[\widetilde{ {m}}.\, n])(\id -{n}\otimes{n})]:(\id -{n}\otimes{n}),$$ there remains another  contribution $$\nabla[(\anti[(\id-n\otimes n)\widetilde{ {m}}.\, n])(\id -{n}\otimes{n})]:(\id -{n}\otimes{n})$$ which still performs  work against  $\delta u$. The raison d'\^{e}tre of the Hadjesfandiari-Dargush formulation, as we understand it, was to avoid that any parts related to ``couple stress normal traction" $\widetilde{m}.n$, other than the contributions to the total force-stresses, i.e. other than $\sigma-\widetilde{\tau}$,  would perform work against $\delta u$. Our calculation shows that this requirement can never be satisfied in any higher gradient elasticity  theory.
\end{remark}

Continuing, Hadjesfandiari tries to support the claim  of a skew-symmetric couple stress tensor with some independent motivations in \cite{hadjesfandiari2013skew}. There, he defines the  ``torsion-tensor" $\Chi$ and the ``mean curvature tensor" $\omega$
 \begin{align}
 \Chi(u)=\sym \nabla \curl u, \qquad \omega(u)=\skew\, \nabla \curl u,
 \end{align}
 respectively.  In Section 3.2. of
  \cite[eq.~32]{hadjesfandiari2013skew}   Hadjesfandiari claims that an inhomogeneous  state $u$ of constant torsional deformation $\Chi(u)=0$ cannot exist. However, the conformal mapping, see Appendix \ref{aapendixconf}
  \begin{align}
\phi_c(x)=\frac{1}{2}\left(2\langle \axl\widehat{W},x\rangle \,x-\axl\widehat{W})\|x\|^2\right)+[\widehat{p}\, \id+\widehat{A}]. x+\widehat{b},
 \end{align}
where $\widehat{W},\widehat{A}\in \so(3)$, $\widehat{b}\in \mathbb{R}^3$, $\widehat{p}\in \mathbb{R}$ are arbitrary but constant is on the one hand inhomogeneous in the displacement $u=\phi_c-x$ but on the other hand gives precisely
 \begin{align}
 \Chi(u)=\Chi(\phi_c(x)-x)=\sym \nabla \curl \phi_c(x)=0.
 \end{align}
 For instance, a simple conformal displacement field is given by
 \begin{align}
 \notag
 \phi_c(x)&=\left(\begin{array}{c}
  2\, x_1^2 -(x_1^2+x_2^2+x_3^2)\\
   2\, x_1\, x_2 \\
    2\, x_1\, x_3
  \end{array}\right)=\left(\begin{array}{c}
   x_1^2 -(x_2^2+x_3^2)\\
   2\, x_1\, x_2 \\
    2\, x_1\, x_3
  \end{array}\right)\ \ \Rightarrow\ \ \nabla \phi_c(x)=\left(\begin{array}{ccc}
   2\, x_1& -2\, x_2& -2\, x_3\\
   2\, x_2&2\, x_1& 0 \\
    2\, x_3& 0&2\, x_1
  \end{array}\right)\\ &\Rightarrow\ \ \skew\nabla \phi_c(x)=\left(\begin{array}{ccc}
   0& -2\, x_2& -2\, x_3\\
   2\, x_2&0& 0 \\
    2\, x_3& 0&0
  \end{array}\right)\ \ \Rightarrow\ \ \axl\skew\nabla \phi_c(x)=\left(\begin{array}{c}
   0\\
   -2\, x_3 \\
    2\, x_2
  \end{array}\right) \\ &\Rightarrow\ \ \curl \phi_c(x)=\left(\begin{array}{c}
   0\\
   -\, x_3 \\
    \, x_2
  \end{array}\right)\ \ \Rightarrow\ \ \nabla \curl  \phi_c(x)=\left(\begin{array}{ccc}
   0& 0& 0\\
   0&0& -1 \\
    0& 1&0
  \end{array}\right)\Rightarrow\ \ \sym\nabla \curl\phi_c(x)=0.\notag
 \end{align}

\section{Existence and uniqueness of the solution in the Hadjesfandiari-Dargush formulation}\label{existHD}\setcounter{equation}{0}
In view of the previous discussion, we do not think that the Hadjesfandiari-Dargush formulation has a sound physical motivation. Nevertheless, mathematically it is possible to consider the parameter choice inherent in the Hadjesfandiari-Dargush formulation. Let us  consider for simplicity null boundary conditions. Hence, in the following we study the existence of the solution in the space
\begin{equation}
{\widetilde{\mathcal{X}}_0}\,{=}\,\big\{ u\,{\in}\,{H}^1_0(\Omega)\,|\, \curl u\in \, H(\curl; \Omega), \ \ (\id-n\otimes n).\,\curl u\big|_{\Gamma}=0\big\}.
\end{equation}
On ${\widetilde{\mathcal{X}}_0}$ we define the norm
\begin{equation}
\| u \|_{\widetilde{\mathcal{X}}_0}=\left(\|\nabla u\|^2_{L^2(\Omega)}+\| \skew \nabla\axl(\skew \nabla u)\|^2_{L^2(\Omega)} \right)^{\frac{1}{2}}=\left(\|\nabla u\|^2_{L^2(\Omega)}+\frac{1}{4}\| \curl \curl u\|^2_{L^2(\Omega)} \right)^{\frac{1}{2}},
\end{equation}
and the bilinear form
\begin{align}\label{proscalar}
 ((u,v))=\int_\Omega\bigg[& 2\mu\,\langle\sym \nabla u, \sym \nabla v \rangle+\lambda \tr(\nabla u)\,\tr( \nabla   v)\notag\\&+2\,\mu\,L_c^2\, \alpha_3\, \langle \skw[\nabla \axl(\skw \nabla u)],\skw[\nabla \axl(\skw \nabla v)]\rangle\bigg]\,dv \\
 =\int_\Omega\bigg[& 2\mu\,\langle\sym \nabla u, \sym \nabla v \rangle+\lambda \tr(\nabla u)\,\tr( \nabla   v)+2\,\mu\,L_c^2\, \alpha_3\, \langle \curl\curl u,\curl\curl v\rangle\bigg]\,dv,\notag
\end{align}
where
$u,v\in{\widetilde{\mathcal{X}}_0}$. Let us define the linear operator  $l:{\widetilde{\mathcal{X}}_0}\rightarrow\mathbb{R}$, describing the influence of external loads,
$
l(v)=\int_\Omega \langle f, v \rangle \, dv$  {for all} $\widetilde{w}\in{\mathcal{X}_0}.
$ We say that $w$ is a weak solution of the problem $(\mathcal{P})$ if and only if
\begin{equation}\label{wfdh}
((u,v))=l(v)  \ \  \text{ for all } \ \   v\in {\mathcal{X}_0}.
\end{equation}
A classical solution $u\in{\mathcal{X}_0}$
 of the problem $(\mathcal{P})$ is also a weak solution.

\begin{theorem}\label{thexdh}
Assume that
\begin{itemize}
\item[i)] the constitutive coefficients satisfy $\mu>0, \quad 3\, \lambda+2\mu>0, \quad \alpha_3\geq 0$;
\item[ii)] the loads satisfy the regularity condition $f\in L^2(\Omega)$.
\end{itemize}
Then there exists one and only one solution of the problem {\rm (\ref{wfdh})}.
\end{theorem}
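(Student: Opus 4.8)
The natural strategy is to recast the weak problem \eqref{wfdh} as an abstract variational problem on the Hilbert space $\widetilde{\mathcal{X}}_0$ and to invoke the Lax--Milgram theorem; since $((\cdot,\cdot))$ is symmetric, one could equivalently minimise the associated quadratic energy and appeal to the Riesz representation theorem. Accordingly, the plan is to verify four things: (i) that $\widetilde{\mathcal{X}}_0$ equipped with $\|\cdot\|_{\widetilde{\mathcal{X}}_0}$ is a Hilbert space; (ii) that the bilinear form $((\cdot,\cdot))$ is bounded; (iii) that $((\cdot,\cdot))$ is coercive on $\widetilde{\mathcal{X}}_0$; and (iv) that the load functional $l$ is bounded. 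Granting these, Lax--Milgram delivers existence and uniqueness of the weak solution at once.

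Steps (i), (ii) and (iv) I expect to be routine. Completeness in (i) follows since $\widetilde{\mathcal{X}}_0$ is cut out by closed (trace) constraints inside the complete space $\{u\in H^1_0(\Omega)\mid \curl u\in H(\curl;\Omega)\}$. For (ii) I would apply the Cauchy--Schwarz inequality term by term, bounding $\|\sym\nabla u\|_{L^2(\Omega)}$, $\|\tr\nabla u\|_{L^2(\Omega)}$ and $\|\curl\curl u\|_{L^2(\Omega)}$ each by $\|u\|_{\widetilde{\mathcal{X}}_0}$, which yields $|((u,v))|\le C\,\|u\|_{\widetilde{\mathcal{X}}_0}\|v\|_{\widetilde{\mathcal{X}}_0}$. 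For (iv) I would use $|l(v)|\le \|f\|_{L^2(\Omega)}\|v\|_{L^2(\Omega)}\le C\,\|f\|_{L^2(\Omega)}\|v\|_{\widetilde{\mathcal{X}}_0}$, invoking $f\in L^2(\Omega)$ and the Poincar\'e inequality on $H^1_0(\Omega)$.

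The crux is coercivity (iii). Here I would first exploit the pointwise algebraic splitting $2\mu\|\sym\nabla u\|^2+\lambda(\tr\nabla u)^2=2\mu\|\dev\sym\nabla u\|^2+\tfrac{2\mu+3\lambda}{3}(\tr\nabla u)^2$, so that the assumptions $\mu>0$ and $3\lambda+2\mu>0$ furnish a constant $c_0>0$ with $2\mu\|\sym\nabla u\|^2+\lambda(\tr\nabla u)^2\ge c_0\,\|\sym\nabla u\|^2$ pointwise. The decisive tool is then \textbf{Korn's first inequality} on $H^1_0(\Omega)$, which upgrades control of the symmetric part to control of the full gradient, $\|\sym\nabla u\|^2_{L^2(\Omega)}\ge c_K\,\|\nabla u\|^2_{L^2(\Omega)}$. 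Combining these gives
\begin{equation}
((u,u))\ \ge\ c_0\,c_K\,\|\nabla u\|^2_{L^2(\Omega)}+2\mu\,L_c^2\,\alpha_3\,\|\curl\curl u\|^2_{L^2(\Omega)}.
\end{equation}
For $\alpha_3>0$ the second summand dominates $\tfrac14\|\curl\curl u\|^2_{L^2(\Omega)}$, so that $((u,u))\ge c\,\|u\|^2_{\widetilde{\mathcal{X}}_0}$ with $c=\min\{c_0 c_K,\,8\mu L_c^2\alpha_3\}>0$, and the hypotheses of Lax--Milgram are met.

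I expect the genuine difficulties to sit in two places. First, Korn's inequality is the engine of the whole estimate and must be applied on $\widetilde{\mathcal{X}}_0\subset H^1_0(\Omega)$, where the homogeneous clamping makes the first (rigid-motion-free) Korn inequality available, so no quotient by infinitesimal rigid motions is needed. Second, and more delicately, the limiting value $\alpha_3=0$ is \emph{not} reached by the coercivity estimate above: the $\curl\curl$ term then drops out of $((\cdot,\cdot))$ while it still appears in $\|\cdot\|_{\widetilde{\mathcal{X}}_0}$. In that degenerate case the bilinear form reduces to that of classical linear elastostatics, which is coercive and well-posed on $H^1_0(\Omega)$ by the algebraic positivity together with Korn and Poincar\'e alone; I would therefore treat $\alpha_3=0$ separately, reading the solution in the classical elasticity setting, and reserve the full-norm Lax--Milgram argument on $\widetilde{\mathcal{X}}_0$ for $\alpha_3>0$.
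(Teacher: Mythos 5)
Your proposal is correct and follows essentially the same route as the paper's own proof: Lax--Milgram on $\widetilde{\mathcal{X}}_0$, boundedness via Cauchy--Schwarz, coercivity from the positivity assumptions on $\mu$ and $3\lambda+2\mu$ combined with Korn's first inequality on $H^1_0(\Omega)$, boundedness of $l$ via Poincar\'e, and the degenerate case $\alpha_3=0$ handled separately as classical linear elasticity. Your explicit $\dev$/$\tr$ orthogonal splitting and the remark on completeness of $\widetilde{\mathcal{X}}_0$ merely spell out details the paper leaves implicit.
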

\begin{proof}
In the case $\alpha_3=0$ we have the boundary value problem from classical elasticity. Further, we consider  the case $\boldsymbol{\alpha_3> 0}$.
The Cauchy-Schwarz inequality, the inequalities $(a\pm b)^2\leq 2(a^2+b^2)$ and the assumption upon the constitutive coefficients lead to
\begin{align}
((u,v))&\leq \dd \,C\, \|w\|_{{\widetilde{\mathcal{X}}_0}}\,\,\|\widetilde{w}\|_{{\widetilde{\mathcal{X}}_0}}\, ,
\end{align}
which means that $((\cdot,\cdot))$ is bounded. On the other hand, we have
\begin{align}
(({u},{u}))
=\int_\Omega\bigg[& 2\mu\,\|\sym \nabla u\|^2+\lambda\,[ \tr(\nabla u)]^2+2\,\mu\,L_c^2\, \alpha_3\, \|\skw[\nabla \axl(\skw \nabla u)]\|^2]\bigg]\, dv ,
\end{align}
for all
$u\in {\widetilde{\mathcal{X}}_0}$. Moreover, as a consequence of the properties  i) of the constitutive coefficients
 we have that there exist the positive constant $c$
\begin{align}
(({u},{u}))
&\geq \dd c\,\int _\Omega\biggl(\| \sym\nabla u\|^2+ \|\skw[\nabla \axl(\skw \nabla u)]\|^2\biggl)\, dv.
\end{align}
From  linearized elasticity we have   Korn's inequality \cite{Neff00b}, that is
\begin{align}\label{Korn}
\| \nabla  u\|_{L^2(\Omega)}\leq C \|\sym {\nabla}\, u\|_{L^2(\Omega)}\, ,
\end{align}
for all functions $u\in H_0^1(\Omega;\Gamma)$ with some constants $C>0$, for bounding the deformation of an elastic medium in terms of the symmetric strains.
Hence, using the Korn's inequality \eqref{Korn}, it results that there is a positive constant $C$ such that
\begin{align}
({u},{u})
&\geq \dd c\,\int _\Omega\biggl(\|\nabla u\|^2+ \|\skw[\nabla \axl(\skw \nabla u)] \|^2\biggl)\, dv=c\,\|u\|^2_{{\widetilde{\mathcal{X}}_0}}.
\end{align}
Hence our bilinear form $((\cdot,\cdot))$ is coercive.  The Cauchy-Schwarz inequality and the Poincar\'{e}-inequality  imply that the linear operator $l(\cdot)$ is bounded. By the Lax-Milgram theorem it follows that (\ref{wfdh}) has one and only one solution and the proof is complete.
\end{proof}

\begin{remark} The Lax-Milgram theorem used in the proof of the previous theorem also offers a continuous dependence result on the load $f$.   Moreover, the weak solution $u$  minimizes  on ${\mathcal{X}_0}$ the  energy functional
\begin{align}
I(u)=\int_\Omega\bigg[& 2\,\mu\,\|\sym \nabla u\|^2+\lambda\,[ \tr(\nabla u)]^2\notag+2\, \mu\,L_c^2\, \alpha_3\, \|\skw(\Curl (\sym \nabla u))\|^2-\langle f,u\rangle \bigg]\, dv.\notag
\end{align}
\end{remark}

\section{The constrained Cosserat formulation of the  Hadjesfandiari-Dargush model: well posedness of a degenerate Cosserat model}\setcounter{equation}{0}

Similarly  to the classical indeterminate couple stress model, also the Hadjesfandiari-Dargush formulation can be obtained as a constrained Cosserat model. We only need to adapt the curvature energy. We consider the replacement ${\skew \,\nabla u \mapsto \overline{A}}\in \so(3)$, and we obtain the energy
\begin{align}
\mathcal{W}&=2\,\mu\,\|{\rm sym} \nabla u\|^2+\lambda\,[ \tr(\nabla u)]^2+\mu_c\,\|\skw \nabla u-\overline{A}\|^2+2\,\mu\,L_c^2\,\|\skew\nabla \axl(\overline{A})\|^2\\
&=2\,\mu\,\|{\rm sym} \nabla u\|^2+\lambda\,[ \tr(\nabla u)]^2+\mu_c\,\|\skw \nabla u-\overline{A}\|^2+2\,\mu\,L_c^2\,\|\curl \axl(\overline{A})\|^2\notag
\end{align}
for the Cosserat model.

Using the usual procedure, it follows that there exists a unique solution $(u,\overline{A})$ of the corresponding minimization problem, i.e. to find the minimum of the energy
\begin{align}
I(u)=\int_\Omega\bigg[& 2\,\mu\,\|{\rm sym} \nabla u\|^2+\lambda\,[ \tr(\nabla u)]^2+\mu_c\|\skw \nabla u-\overline{A}\|^2+2\,\mu\,L_c^2\,\|\curl \axl(\overline{A})\|^2\\
&-\langle f,u\rangle-\langle \axl(\mathfrak{M}),\axl(\overline{A})\rangle \bigg]\, dv,\notag
\end{align}
where $f:\Omega\rightarrow\mathbb{R}^3$ and $\mathfrak{M}:\Omega\rightarrow\mathbb{R}^{3\times 3}$ are prescribed, such that
$u\in H_0^1(\Omega)$ and $\axl(\overline{A})\in H(\Curl;\Omega)$, $\axl(\overline{A})\times n\big|_{\Gamma}=0$.

\begin{figure}
\setlength{\unitlength}{1mm}
\begin{center}
\begin{picture}(10,30)
\thicklines

\put(-40,18){\oval(86,43)}
\put(-81,36){\footnotesize{\bf \ \ \ $\boldsymbol{\skew \nabla u \mapsto \overline{A}}\in \so(3)$, degenerate Cosserat model }}
\put(-81,32){\footnotesize{ ${\sigma\sim 2\,\mu\, \sym\nabla u+2\, \mu_c \skw(\nabla u-\overline{A})\not\in {\rm Sym}(3)}$}}
\put(-81,28){\footnotesize{ $m=2\,\mu\,L_c^2\, \skew\nabla \axl(\overline{A})\in \so(3)$}}
\put(-81,24){\footnotesize{ $\mathcal{E}\sim\mu\,\|{\rm sym} \nabla u\|^2+\mu_c\|\skw \nabla u-\overline{A}\|^2+\mu\,L_c^2\,\|\skew\nabla \axl(\overline{A})\|^2$}}
\put(-81,20){\footnotesize{ \bf invariant under: \!$u\mapsto u+\overline{W}.x+\overline{b}, \, \overline{b}\in \mathbb{R}^3$}}
\put(-81,16){\footnotesize{\bf \qquad\qquad\quad \qquad\    $A\mapsto \overline{A}+{\overline{W}}, \,\,\,\overline{W}\in \so(3)$}}
\put(-81,12){\footnotesize{ {\bf well-posed: } $u\in H^1(\Omega)$, $\axl(\overline{A})\in H({\rm curl};\Omega)$, }}
\put(-81,8){\footnotesize{ {\bf degenerate curvature energy}, }}
\put(-81,4){\footnotesize{ \bf   $\boldsymbol{3+2=5}$ geometric bc: $u\big|_\Gamma, \ \  \langle \axl(\overline{A}),\tau_\alpha\rangle\big|_\Gamma$}}
\put(-81,0){\footnotesize{ \bf   $\boldsymbol{3+2=5}$ traction bc: $\sigma.n\big|_{\partial \Omega\setminus \Gamma},\ \ \ \langle m.n,\tau_\alpha\rangle\big|_{\partial \Omega\setminus \Gamma}$ }}

\put(3,18){\vector(1,0){19}}
\put(8,28){\footnotesize{$\nabla u=\overline{A}$}}
\put(8,24){\footnotesize{constrain}}
\put(8,20){\footnotesize{$\mu_c\rightarrow\infty$}}

\put(57,19){\oval(69,44)}
\put(25,36){\footnotesize{ \bf Hadjesfandiari-Dargush model}}
\put(25,32){\footnotesize{ \bf $\sigma\sim 2\,\mu\, \sym\nabla u\in {\rm Sym}(3)$}}
\put(25,28){\footnotesize{ $\widetilde{ {m}}=2\,\mu\,L_c^2\,  \skew \nabla \axl(\skew \nabla  u)\in\so(3)$}}
\put(25,24){\footnotesize{ $\widetilde{\tau}=2\,\mu\,L_c^2\,  \anti[\Div(\skew\nabla \axl(\skew \nabla  u))]\in\so(3)$}}
\put(25,20){\footnotesize{ $\mathcal{E}\sim\mu\,\|{\rm sym} \nabla u\|^2+\mu\,L_c^2\,\|\skew \nabla \axl(\skew \nabla  u)\|^2$}}
\put(25,16){\footnotesize{ \bf invariant under:  $u\mapsto u+\overline{W}.x+\overline{b},  \overline{b}\in \mathbb{R}^3$}}
\put(25,12){\footnotesize{ \bf \ \quad\quad\quad\quad \qquad\    $\nabla u\mapsto \nabla u+\overline{W}, {\overline{W}}\in \so(3)$}}
\put(25,8){\footnotesize{{\bf well-posed:}  $u\in H^1(\Omega)$,  $\curl u\in H(\curl;\Omega)$  }}
\put(25,4){\footnotesize{\bf 5 geometric bc}:  $u\big|_\Gamma$, $\langle \curl u,\tau_\alpha\rangle\big|_{\Gamma}$}
\put(25,0){\footnotesize{\bf  5 traction bc}}

\end{picture}
\end{center}
\caption{A possibility of lifting the 4th.-order indeterminate  couple stress model to a 2nd.-order micromorphic or Cosserat-type formulation formulation. Here $\tau_\alpha$, $\alpha=1,2$ denote two independent tangential vectors on the boundary. }\label{limitmodel}
\end{figure}
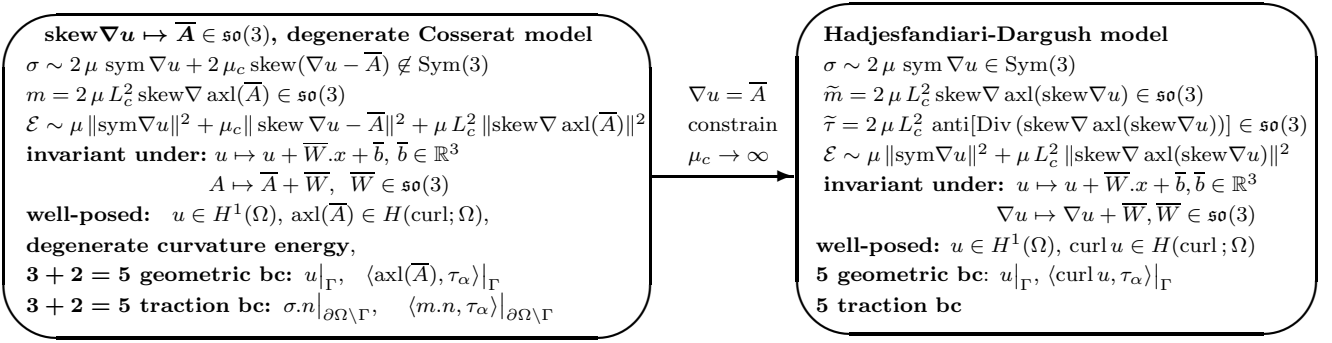

\section{Conclusion}

First, Hadjesfandiari and Dargush reject the notion of additional independent degrees of freedom, which
is purely arbitrary \cite[p. 2496]{hadjesfandiari2011couple}.
 The ``consistent" theory proposed by Hadjesfandiari and Dargush ``resolving all difficulties...", taking only the skew-symmetric
part of the curvature tensor $\nabla \curl \, u$,
 is simply a special case corresponding to some vanishing moduli in the theory. As we have seen, this  choice  is a restriction, but not a necessity.

 In summary, Hadjesfandiari and Dargush have raised \cite[p.~17]{hadjesfandiari2011couple} three  major concerns in the indeterminate couple stress model (see also \cite{hadjesfandiari2013skew}):
 \begin{quote}
 \begin{itemize}
 \item[1)] \textit{The body-couple is present in the constitutive relations for the [total] force-stress tensor in the {\rm [indeterminate couple stress]} theory}.
\item[2)] \textit{The spherical part of the couple-stress tensor is indeterminate, because the curvature
tensor {\rm [$\widetilde{k}=\nabla {\rm curl} \,u$]} is deviatoric}.
\item[3)] \textit{The boundary conditions are inconsistent, because the normal component of moment
traction {\rm [$\langle n,\widetilde{m}.n\rangle$]} appears in the formulation {\rm [of the force stress tensor]}}.
 \end{itemize}
\end{quote}

   Regarding the above three ``serious
inconsistencies" presented in \cite[p. 17]{hadjesfandiari2011couple} we may answer:
\begin{quote}
  \begin{itemize}
 \item[1)] This is of course not inconsistent.
It is well-known that the Cauchy-like total force stress tensor $\sigma-\widetilde{\tau}$  is not the constitutive stress. The
constitutive stresses and couples are those arising from the virtual work principle
\eqref{gradeq211} by fixing $\sym \nabla u$ and $\nabla \curl u$ as independent constitutive quantities\footnote{However, this does not imply that we may independently prescribe
 $u$ and $\curl u$ on the boundary.}. More precisely,   the
constitutively dependent quantities are the energetic conjugates of $\frac{\rm d}{\rm dt}\sym \nabla u$ and $\frac{\rm d}{\rm dt}\nabla \curl u$, respectively. We have to note that the constitutive dependent quantities are not the energetic conjugates of $\frac{\rm d}{\rm dt}u$ and $\frac{\rm d}{\rm dt}\curl u$, respectively.

 The relation between  the Cauchy-like stress and constitutive stress
indeed involve the volume simple double and triple forces (see \cite{Germain1} in
French, first part on the classical theory).
Note that in the Hadjesfandiari and Dargush-formulation \eqref{HBBC}, the boundary conditions would acquire the same form and meaning as in the classical format: the total force stress tensor $\sigma-\widetilde{\tau}$ would be the Cauchy stress tensor and the curvature of the surface normal would not intervene in the traction boundary conditions. However, the incorrect boundary conditions  used by them  \eqref{HBBC} are not any more in the completely independent decomposition form  \eqref{HBBC1} of the boundary conditions;
\item[2)] The indeterminacy of the spherical part of the couple stress is not inconsistent.
Like the pressure in an incompressible body, it is indeterminate in the local
constitutive law but can be found from the boundary conditions after solving the equilibrium equations.
 It is a reaction stress, as it is well-known in the theory of continua with
internal constraints.
 Should we say that the theory of incompressible bodies is ``inconsistent"?  Surely
not, it is mathematically clear even though it may induce
 computational difficulties.
 \item[3)] As we have shown in this paper, see also \cite{MadeoGhibaNeffMunchKM}, the boundary conditions used by Hadjesfandiari and Dargush \cite{hadjesfandiari2011couple} are incomplete, therefore no inconsistency occurs,  see again \eqref{HBBC1}  vs. \eqref{HBBC}.
 \end{itemize}
\end{quote}

Finally, we like to mention that  the extra constitutive parameter $\eta^\prime$ of Grioli's model does not intervene in the field
 partial differential equations.
It will arise through the boundary conditions.
 We do not see which mechanical principle is violated by that. In summary, there is  a fully consistent version of the indeterminate-couple stress model with only 3 constitutive parameters. This is the modified couple stress model, see \cite{Neff_Jeong_IJSS09,NeffGhibaMadeoMunch} and a novel variant of it is recently discussed in \cite{NeffGhibaMadeoMunch}.

\section*{Acknowledgement} We are grateful to Ali Reza Hadjesfandiari (University at Buffalo) and Gary F. Dargush (University at Buffalo) for sending us the paper \cite{hadjesfandiari2014evo} prior to publication. We would like to thank Samuel Forest (CNRS Mines ParisTech) for detailed discussions. Ionel-Dumitrel Ghiba acknowledges support from the Romanian National Authority for Scientific Research (CNCS-UEFISCDI), Project No. PN-II-ID-PCE-2011-3-0521.

\bibliographystyle{plain} 
\addcontentsline{toc}{section}{References}

\begin{footnotesize}

\appendix

\section{Conformal invariance of the curvature energy and group theoretic arguments in favor of the modified couple stress theory}\setcounter{equation}{0}\label{aapendixconf}

This section is taken from \cite{NeffGhibaMadeoMunch} and included here for this contribution to be rather self-contained. An infinitesimal conformal mapping \cite{Neff_Jeong_Conformal_ZAMM08,Neff_Jeong_IJSS09} preserves (to first order) angles and shapes of infinitesimal figures. The included inhomogeneity is therefore only a global feature of the mapping. There is locally no shear-type deformation. Therefore it seems natural to require that the second gradient model should not ascribe energy to such deformation modes.

A map $\phi_c:\mathbb{R}^3\rightarrow\mathbb{R}^3$ is infinitesimal conformal if and only if its Jacobian satisfies pointwise $\nabla \phi_c(x)\in \mathbb{R}\cdot \id+\so(3)$, where $\mathbb{R}\cdot \id+\so(3)$ is the conformal Lie-algebra. This implies \cite{Neff_Jeong_Conformal_ZAMM08,Neff_Jeong_IJSS09,Neff_Jeong_bounded_stiffness09} the representation
\begin{align}
\phi_c(x)=\frac{1}{2}\left(2\langle \axl\overline{W},x\rangle \,x-\axl\overline{W}\|x\|^2\right)+[\widehat{p}\,\id+\widehat{A}]. x+\widehat{b}\,,
\end{align}
where $\overline{W},\widehat{A}\in \so(3)$, $\widehat{b}\in \mathbb{R}^3$, $\widehat{p}\in \mathbb{R}$ are arbitrary given constants.
For the infinitesimal conformal mapping $\phi_c$ we note
\begin{align}\label{conf}
\begin{array}{ll}
\nabla \phi_c(x)=[\langle \axl\overline{W},x\rangle+\widehat{p}]\, \id +\anti(\overline{W}.\, x)+\widehat{A},&\qquad
{\rm div}\,\phi_c(x)=\tr[\nabla \phi_c(x)]=3\, [\langle \axl\overline{W},x\rangle+\widehat{p}],\vspace{1.2mm}\\
\skew  \nabla \phi_c(x)=\anti(\overline{W}.\, x)+\widehat{A},&\qquad
\sym \nabla \phi_c(x)=[\langle \axl\overline{W},x\rangle+\widehat{p}]\, \id,\vspace{1.2mm}\\
\dev\sym \nabla \phi_c(x)=0,&\qquad
\nabla \curl \phi_c(x)=2\, \overline{W}\in \so(3),\vspace{1.2mm}\\
\sym \nabla\curl \phi_c(x)=0,&\qquad
\skew\,\nabla \curl \phi_c(x)=2\, \overline{W}.
\end{array}
\end{align}
These relations  are easily established. By {\bf conformal invariance} of the curvature energy term we  mean that the curvature energy vanishes on infinitesimal conformal mappings. This is equivalent to
\begin{align}
W_{\rm curv}(D^2 \phi_c)=0\qquad \text{for all conformal maps} \quad \phi_c,
\end{align}
or in terms of the second order couple stress tensor $\widetilde{m}:=D_{\nabla \curl u}W_{\rm curv}(\nabla \curl u)$,
\begin{align}
\widetilde{m}(D^2 \phi_c)=0 \qquad \text{for all conformal maps} \quad \phi_c.
\end{align}
The classical linear elastic energy still ascribes energy to such a deformation mode, but only related to the bulk modulus, i.e.,
\begin{align}
W_{\rm lin}(\nabla \phi_c)=\underbrace{\mu\, \|\dev\sym \nabla \phi_c\|^2}_{=0}+\frac{2\,\mu+3\,\lambda}{2} \, [\tr(\nabla  \phi_c)]^2=\frac{2\,\mu+3\,\lambda}{2} \, [\tr(\nabla  \phi_c)]^2.
\end{align}
In case of a classical infinitesimal perfect plasticity formulation with von Mises deviatoric flow rule, conformal mappings are precisely those inhomogeneous mappings, that do not lead to plastic flow \cite{Neff_Cosserat_plasticity05}, since the deviatoric stresses remain zero: $\dev \sym \nabla \phi_c=0$.

In that perspective
\[\framebox[1.1\width]{\textit{conformal mappings are ideally elastic transformations and should not lead to moment stresses.}}
\]

The underlying additional invariance property of the modified couple stress theory is precisely conformal invariance. In the modified couple stress model, these deformations are free of size-effects, while e.g. the Hadjesfandiari and Dargush choice would describe size-effects. In other words, the generated couple stress tensor $\widetilde{m}$ in the modified couple stress model is zero for this inhomogeneous deformation mode, while in the  Hadjesfandiari and Dargush choice $\widetilde{m}$ is constant and skew-symmetric\footnote{This observation is a further development in understanding why the Hadjesfandiari and Dargush \cite{hadjesfandiari2010polar,hadjesfandiari2011couple,hadjesfandiari2014evo} choice is rather meaningless, while mathematically not forbidden.}.
\end{footnotesize}
\end{document}